\newtheorem{thm}{Theorem}
\theoremstyle{remark}
\newtheorem{remark}[thm]{Remark}
\def\R{\mathbb{R}}
\def\1{\bm{1}}
\title{Near-Optimal Compressive Binary Search}
\begin{document}
\author{\IEEEauthorblockN{Matthew L. Malloy}
\IEEEauthorblockA{ Electrical and Computer Engineering\\
University of Wisconsin-Madison\\
Email: mmalloy@wisc.edu}
\and
\IEEEauthorblockN{Robert D. Nowak}
\IEEEauthorblockA{
Electrical and Computer Engineering\\
University of Wisconsin-Madison\\
Email: nowak@ece.wisc.edu}
}
\maketitle

\begin{abstract}
We propose a simple modification to the recently proposed \emph{compressive binary search} \cite{2012arXiv1202.0937D}.  
The modification removes an unnecessary and suboptimal factor of $\log\log n$ from the SNR requirement, making the procedure optimal (up to a small constant).  Simulations show that the new procedure performs significantly better in practice as well.
We also contrast this problem with the more well known problem of {\em noisy binary search}. 
\end{abstract}
\section{Introduction}
The recently proposed \emph{compressive binary search} (CBS) algorithm \cite{2012arXiv1202.0937D} aims to determine the location of a single non-zero entry in a vector $\bm{x} \in \mathbb{R}^n$ using $m$ adaptive linear projections of the form 
\begin{eqnarray} \nonumber
y_i = \left\langle \bm{a}_i, \bm{x}  \right\rangle + z_i \qquad i = 1,...,m
\end{eqnarray}
where $\bm{a}_i\in\R^n$ are sensing vectors with $||\bm{a}_i|| =1$ and $z_i$ are i.i.d. $\mathcal{N}(0,1)$. The CBS algorithm proposed in \cite{2012arXiv1202.0937D} finds the non-zero entry with vanishing probability of error as $n$ gets large provided
$\mu \geq C \sqrt{(n/m) \log \log_2 n}$
where $\mu$ is the amplitude of the non-zero entry.
A bit more precisely, Theorem 1 in \cite{2012arXiv1202.0937D} states that $\mathbb{P}_e \leq \delta$ provided
\begin{eqnarray} \label{eqn:CBSOrig}
\mu \geq  \sqrt{ \frac{8n}{m}\left( \log \frac{1}{2\delta} + \log \log_2 n \right)}
\end{eqnarray} 
where $\mathbb{P}_e$ is the probability the procedure fails to return an index corresponding to the non-zero entry.  The dependence on $\delta$ is to be expected, but the authors of \cite{2012arXiv1202.0937D} rightly question whether the $\log\log_2 n$ term is needed.  The main contribution of this paper is a simple modification of the algorithm proposed in \cite{2012arXiv1202.0937D} which eliminates this unnecessary and suboptimal dependence on $n$.

\section{Main Result} 
The CBS algorithm proposed in \cite{2012arXiv1202.0937D} operates as follows.
The algorithm consists of $s_0 = \log_2 n$ steps (as in \cite{2012arXiv1202.0937D} we assume $n$ is dyadic). At each step of the algorithm, indexed by $s$, measurements are taken on progressively smaller dyadic subintervals of $\{1,\dots,n\}$.  The sensing vectors are discrete Haar wavelets, and the sign of each measurement gives an indication of whether the non-zero entry is in the right or left half-interval of the wavelet's support.   The CBS algorithm is outlined in Fig.~\ref{fig:ocbs}. 

\begin{figure}[h] \small
\fbox{\parbox[b]{3.35in}{{\underline{\bf Compressive Binary Search (CBS)}}  \\ \vspace{-.15in} \\
number of steps: $s_0 := \log_2 n$ \\
measurements per step: 
$m_s$ , with $\sum_{s=1}^{s_0} m_s \leq m$ \\
initial support: $J_0^{(1)} := \{1,\dots,n\}$ \\ \vspace{-.1in} \\
for $s=1,\dots,s_0$ \\ \vspace{-.15in} \\
1) split: $J_1^{(s)}$ and $J_2^{(s)}$, left and right subinterval of $J_0^{(s)}$ \\
2) sensing vector: $\bm{u}^{(s)} = 2^{-(s_0-s+1)/2}$ on $J_1^{(s)}$, \\ \indent \ \ \ \ $\bm{u}^{(s)}= -2^{-(s_0-s+1)/2}$ on $J_2^{(s)}$, and $0$ otherwise  \\
3) measure: $y_i^{(s)} = \langle \bm{u}^{(s)},\bm{x}\rangle + z_i^{(s)}, \\
\indent \ \ \ \ z_i^{(s)} \overset{iid}{\sim} {\cal N}(0,1), \  i=1,\dots,m_s$ \\
4) update support: $J_0^{(s+1)} = J_1^{(s)}$ if $\sum_{i=1}^{m_s}y_i^{(s)}>0$\\ \indent \ \ \ \  and $J_0^{(s+1)} = J_2^{(s)}$ otherwise \\
end  \\ \vspace{-.1in} \\
output: $J_0^{(s_0+1)}$ (a single index)
}}
\caption{Compressive Binary Search algorithm.}
\label{fig:ocbs}
\end{figure}

Since the sensing vectors have unit norm, the magnitude of the non-zero entries of the sensing vectors grow as the algorithm proceeds from coarse-to-fine wavelets. Consequently, the SNR of the measurements grows exponentially as the procedure progresses.  In \cite{2012arXiv1202.0937D}, the authors divide the $m$ measurements among the steps as follows.  For $s=1,\dots,s_0$
\begin{eqnarray*}
m_s := \tilde{m}_s +1, \qquad   \tilde{m}_s = \left \lfloor (m - s_0) \; 2^{-s} \right \rfloor
\end{eqnarray*}
It is easy to check that $\sum_{s=1}^{s_0} m_s \leq m$. This allocation roughly equalizes the SNR at each step. If $\mu$ satisfies (\ref{eqn:CBSOrig}), then the CBS algorithm is guaranteed to have a probability of error of at most $\delta/s_0$ at each step (and the union bound yields an overall probability of error of at most $\delta$).

Here, instead, we allocate the $m$ measurements as follows:
\begin{eqnarray*}
m_s := \tilde{m}_s +1, \qquad   \tilde{m}_s = \left \lfloor (m - s_0) \; s\; 2^{-(s+1)} \right \rfloor.
\end{eqnarray*}
Again, it is easily verified that $\sum_{s=1}^{s_0} m_s \leq m$. With this allocation, the probability of error is at most a constant times $2^{-s}$ at each step; i.e.,
the probability of error decreases exponentially over the steps, rather than remaining constant as above. This simple modification is enough to eliminate the $\log\log_2 n$ term in the bound in  \cite{2012arXiv1202.0937D}.
\begin{thm}
If $m \geq 2 \log_2 n$ and
\begin{eqnarray}
m_s := \tilde{m}_s +1, \qquad   \tilde{m}_s = \left \lfloor (m - s_0) \; s\; 2^{-(s+1)} \right \rfloor
\end{eqnarray} 
then $\sum_{s=1}^{s_0} m_s\leq m$ and the CBS algorithm succeeds with
$\mathbb{P}_e \leq \delta$
provided magnitude of the non-zero entry satisfies
\begin{eqnarray} \label{eqn:CBSimprov}
\mu \geq  \sqrt{ \frac{16n}{m}\log\left( \frac{1}{2\delta} +1 \right)}.
\end{eqnarray}
\end{thm}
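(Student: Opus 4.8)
The plan is to control the per-step probability of a wrong left/right decision and then sum these over the $s_0$ steps, using the fact that the new allocation forces this probability to decay geometrically in $s$. As a preliminary I would check the budget constraint: since $m_s=\tilde m_s+1$ and $\sum_{s\ge 1}s\,2^{-(s+1)}=1$, we get $\sum_{s=1}^{s_0}m_s\le s_0+(m-s_0)\sum_{s\ge1}s\,2^{-(s+1)}=m$.

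Next I would set up the error event. Say the algorithm is \emph{on track} at step $s$ if the interval $J_0^{(s)}$ still contains the non-zero index; the final output is incorrect exactly when some step makes the wrong update while on track, so by the union bound $\mathbb{P}_e\le\sum_{s=1}^{s_0}p_s$, where $p_s$ is the conditional probability that step $s$ errs given it is on track. To bound $p_s$, observe that on track the non-zero entry lies in $J_0^{(s)}$, where $\bm{u}^{(s)}$ takes the values $\pm 2^{-(s_0-s+1)/2}$, so $\langle\bm{u}^{(s)},\bm{x}\rangle=\pm\mu\,2^{-(s_0-s+1)/2}$ with the sign indicating the correct half. Hence $\sum_{i=1}^{m_s}y_i^{(s)}$ is Gaussian with mean of magnitude $m_s\mu\,2^{-(s_0-s+1)/2}$ and variance $m_s$, and a sign error has probability $Q\!\left(\sqrt{m_s}\,\mu\,2^{-(s_0-s+1)/2}\right)\le\tfrac12\exp\!\big(-\tfrac12 m_s\mu^2\,2^{-(s_0-s+1)}\big)$ via the tail bound $Q(t)\le\tfrac12 e^{-t^2/2}$.

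The heart of the argument is then a bookkeeping step: using $2^{s_0}=n$ to write $2^{-(s_0-s+1)}=2^s/(2n)$, using $\lfloor x\rfloor+1>x$ to get $m_s>(m-s_0)\,s\,2^{-(s+1)}$, and using $m\ge 2\log_2 n=2s_0$ to get $m-s_0\ge m/2$, I would obtain $p_s\le\tfrac12\exp\!\big(-s\,m\mu^2/(16n)\big)$. Plugging in the hypothesis $\mu^2\ge (16n/m)\log(1/(2\delta)+1)$ gives $p_s\le\tfrac12\big(1+\tfrac{1}{2\delta}\big)^{-s}$, and summing the geometric series (to infinity, not merely to $s_0$) yields $\mathbb{P}_e\le\tfrac12\sum_{s\ge1}\rho^s=\tfrac12\cdot\tfrac{\rho}{1-\rho}=\delta$ with $\rho=2\delta/(1+2\delta)$.

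I do not expect a genuine obstacle, since the clever ingredient — the measurement allocation — is already supplied; the only care needed is in chaining the floor-function and $m\ge 2s_0$ estimates so the constant $16$ emerges correctly, and in noting that it is precisely the geometric (rather than uniform) decay of $p_s$ that lets the sum be bounded by a quantity free of $n$, thereby eliminating the $\log\log_2 n$ factor of the original analysis.
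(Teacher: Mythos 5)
Your proposal is correct and follows essentially the same route as the paper's proof: check the budget via $\sum_{s\ge 1} s\,2^{-(s+1)}=1$, union-bound over steps, apply the Gaussian tail bound to get the per-step error $\tfrac12\exp(-m_s\mu^2 2^s/(4n))$, use $m_s\ge (m-s_0)s2^{-(s+1)}$ and $m\ge 2s_0$ to obtain the exponent $ms\mu^2/(16n)$, and sum the resulting geometric series to $\delta$. The only difference is that you spell out the ``on track'' conditioning and the Gaussian computation in slightly more detail than the paper does.
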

\begin{proof}
The total measurement budget satisfies 
\begin{eqnarray*}
\sum_{s=1}^{s_0} m_s = s_0 + \sum_{s=1}^{s_0} \tilde{m}_s \leq s_0 + (m-s_0) \sum_{s=1}^{s_0} \; s \; 2^{-(s+1)} \leq m,
\end{eqnarray*}
since $\sum_{s=1}^{s_0}  s  2^{-(s+1)} \leq 1$.
By the union bound and a Gaussian tail bound, the total error probability satisfies 
\begin{eqnarray*}
\mathbb{P}_e &\leq &\frac{1}{2}\,  \sum_{s=1}^{s_0}  \exp \left( -\frac{m_s \mu^2 2^s}{4n}\right).
\end{eqnarray*}
Since $m_s \geq (m-s_0) s  2^{-(s+1)}$ and $m \geq 2 \log_2 n = 2 s_0$, we conclude $m_s \geq\frac{m}{2} s 2^{-(s+1)}$ and thus 
\begin{eqnarray*}
\mathbb{P}_e &\leq &\frac{1}{2} \, \sum_{s=1}^{s_0}  \exp \left( -\frac{m s\mu^2}{16n}\right).
\end{eqnarray*}
Letting $\mu \geq  \sqrt{\frac{16n}{m} \log\left( \frac{1}{2\delta} +1 \right)}$ yields
\begin{eqnarray*}
\mathbb{P}_e & \leq & \frac{1}{2} \, \sum_{s=1}^{s_0}   \left(\frac{1}{2\delta} +1\right)^{-s} \ \leq \ \delta \ .
\end{eqnarray*}
\end{proof}

\setcounter{thm}{0} 

\begin{remark}
The authors of \cite{2012arXiv1202.0937D}  also show that no procedure can succeed at the compressive binary search problem with probability greater than $1/2$ if $\mu \leq \sqrt{n/m}$.
Using the new allocation of measurements, the CBS algorithm succeeds with probability greater than $1/2$ if $\mu \geq \sqrt{16\log(2)n/m}$,  within a small constant factor of the lower bound.
\end{remark}

\begin{remark}
An algorithm similar to CBS was proposed in \cite{4786013} as a special case of a more general approach to the adaptive sensing problem, although control of the SNR across measurement steps is not discussed.  Additionally, the authors in \cite{5470144} suggest and analyze an adaptive group testing procedure with many of the main ideas of CBS.  The adaptive group testing procedure also has a sub-optimal $\log \log n$ dependence.
\end{remark}

\begin{remark}
The CBS problem is closely related to the so-called {\em noisy binary search} problem \cite{horstein,bz,gbs,kk}.  Noisy binary search addresses a version of the classic binary search problem with binary noise.  If the SNR is equal in each step of CBS (as in 
\cite{2012arXiv1202.0937D}), then it is equivalent to the ``naive'' noisy binary search algorithm discussed in \cite{kk}, which also suffers from the suboptimal $\log\log_2 n$ factor. 
More sophisticated algorithms such as Horstein's algorithm \cite{horstein,bz,gbs} and binary search with backtracking \cite{kk} are optimal to within constants.  The CBS problem, however, is different from noisy binary search in that more localized measurements (or ``queries'') are more reliable.  \newpage  \noindent Because of this unique feature, the optimal algorithms for noisy binary search do not yield optimal solutions for the CBS problem.  Instead, all that is needed to eliminate the $\log \log_2 n$ factor is a measurement allocation that properly exploits the fact that more localized measurements have a larger SNR.

\end{remark}


\begin{figure}[htb]
\vspace{.3cm}
\centerline{\includegraphics[width=9.6cm]{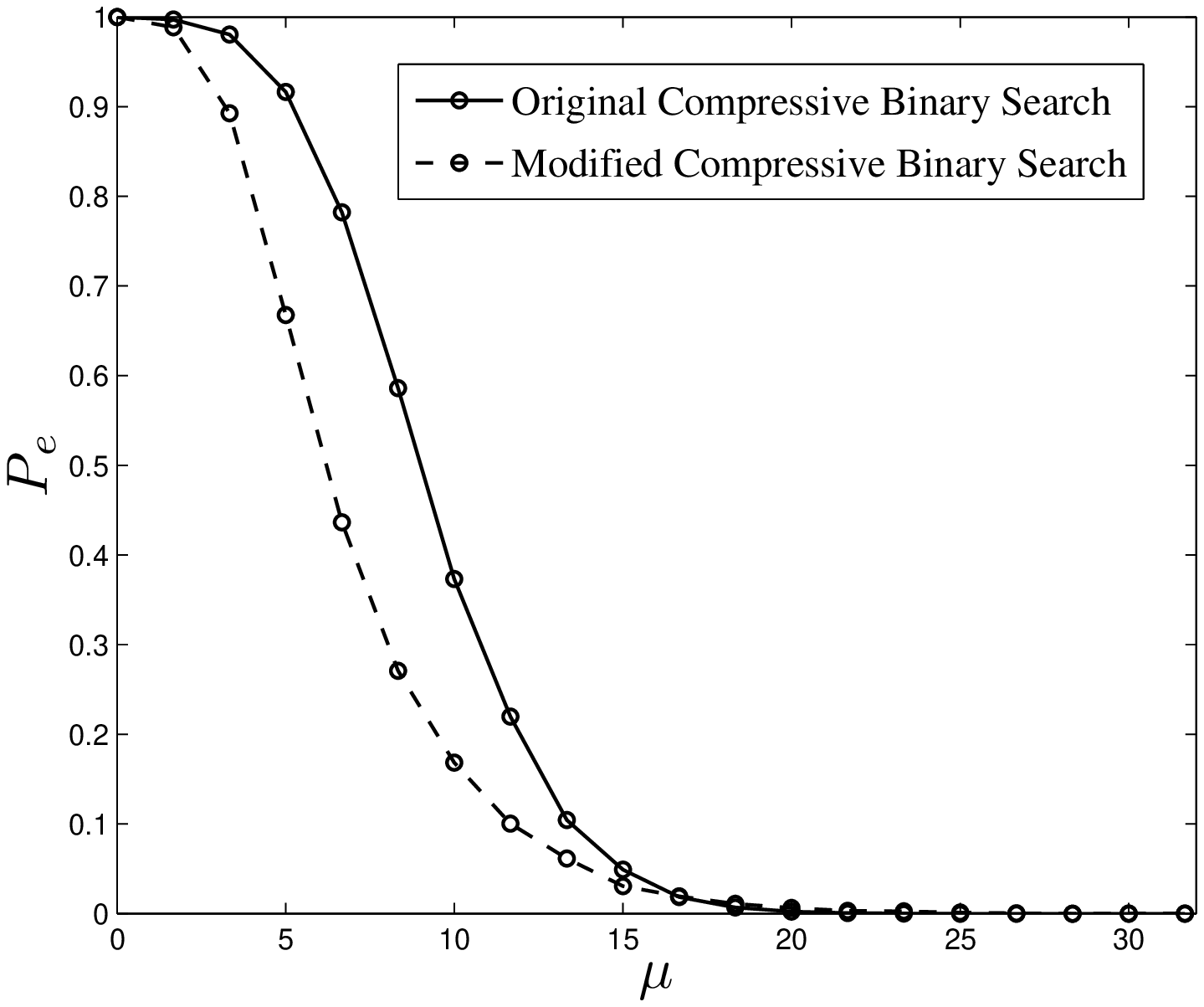}}
\caption{Numerical Simulation.  Empirical performance of CBS (both original and modified) as a function of $\mu$ for $n= 4096$ and $m = 256$. 10,000 trials. 
}
\vspace{.3cm}
\end{figure}

\bibliographystyle{IEEEtran}
\bibliography{CompBSnote.bib}

\end{document}